\newtheorem{proposition}{Proposition}
\newtheorem{lemma}{Lemma}
\begin{document}

\title{On the Effective Capacity of MTC Networks in the Finite Blocklength Regime} 
\author{Mohammad Shehab, Endrit Dosti, Hirley Alves, and Matti Latva-aho\\
	
	\IEEEauthorblockA{
		Centre for Wireless Communications (CWC), University of Oulu, Finland\\
	}
	\vspace{-2mm}
	Email: firstname.lastname@oulu.fi
}
%
\maketitle


\begin{abstract}
This paper analyzes the effective capacity (EC) of delay constrained machine type communication (MTC) networks operating in the finite blocklength (FB) regime. First, we derive a closed-form mathematical approximation for the EC in Rayleigh block fading channels. We characterize the optimum error probability to maximize the concave EC function and study the effect of SINR variations for different delay constraints. Our analysis reveals that SINR variations have less impact on EC for strict delay constrained networks. We present an exemplary scenario for massive MTC access to analyze the interference effect proposing three methods to restore the EC for a certain node which are power control, graceful degradation of delay constraint and joint compensation. Joint compensation combines both power control and graceful degradation of delay constraint, where we perform maximization of an objective function whose parameters are determined according to delay and SINR priorities. Our results show that networks with stringent delay constraints favor power controlled compensation and compensation is generally performed at higher costs for shorter packets.
\end{abstract}
\vspace{-1mm}


\section{Introduction}\label{introduction}
Nowadays, the road to the fifth generation of mobile communication is being paved by leaps and bounds. 5G should be able to support new features such as ultra reliable transmission and massive machine-to-machine (M2M) communication \cite {paper1}. MTC has gained an increasing interest in recent years \cite{Orsino2017}. MTC networks are expected to connect massive number of devices which communicate with high reliability and minimum latency to support mission critical applications, envisaged as \textit{Ultra-Reliable Communication} (URC) \cite{paper1,Dosti,paper3,NokiacMTC2016}. Furthermore, cooperative transmission in MTC is a hot topic for research so far \cite{MTC2}.

Unlike traditional communication systems, which are based on infinite coding schemes, MTC networks are designed to communicate on short packets and delay limited basis. Such demands stimulated a revolutionary trend in information theory studying communication at finite blocklength (FB) \cite {paper2,paper5,paper1,paper14, paper3}. In \cite {paper3}, a per-node throughput model was introduced for AWGN and quasi-static collision channels where interference is treated as additive Gaussian noise while considering average delay. In \cite{paper14}, Yang et al. characterized the attainable rate as a function of blocklength and error probability $\epsilon$ for block fading channels.
 
To model the delay requirements in MTC networks, we resort to the effective capacity metric which was introduced in \cite{paper6} to provide an indication of the maximum possible arrival rate that can be supported by a network subject to a particular latency requirement. A statistical model for a single node effective rate in bits per channel use (bpcu) for a certain error probability with delay exponent was discussed in \cite{paper5} for Rayleigh block fading channels where the channel coefficients are assumed to be fixed for a block of $T_f$ symbols. However, a closed-form expression for the EC was not provided in these discussions. Latency-throughput tradeoff was characterized in \cite{Nokia2} for cellular networks exploiting the EC theory. Musavian et al. analyzed the EC maximization of secondary node with some interference power constraints for primary node in a cognitive radio environment with interference constraints \cite{paper9}. To the best of our knowledge, EC for FB packets transmission in multi-node MTC scenario has not been investigated. For convenience in this paper, we refer to one machine terminal as node.

In this paper, we derive a mathematical expression for EC in quasi-static Rayleigh fading for delay limited networks. This leads us to characterize the optimum error probability which maximizes the EC. We consider dense MTC networks and characterize the effect of interference on their EC. We propose three methods to allow a certain node maintain its EC which are: \textit{i}) Power control; \textit{ii}) graceful degradation of delay constraint; and \textit{iii}) joint model. Power control depends on increasing the power of a certain node to recover its EC which in turn degrades the SINR of other nodes. Our analysis proves that SINR variations have limited effect on EC in networks with stringent delay limits. Hence, the side effect of power control is worse for less stringent delay constraints and vice versa. We illustrate the trade off between power control and graceful degradation of delay constraint. Furthermore, we introduce a joint model which combines both of them. The operational point to determine the amount of compensation performed by each of the two methods in the joint model is determined by maximization of an objective function leveraging the network performance.
\vspace{-1mm}
\section{system layout} \label{system model}
\vspace{-1mm}
\subsection{Network model}
\vspace{-1mm}
We consider a transmission scenario in which $N$ nodes  transmit packets with equal power to a common controller through a Rayleigh block fading collision channel with blocklength $T_f$ as shown in Fig. \ref{channel}. The received vector $\mathbf {y}_n\in \mathbb{C}^n$ of node $n$ is given by
\begin{align}\label{eq1}
\mathbf {y}_n=h_n\mathbf {x}_n+\sum_{s\neq n} h_s\mathbf {x}_s+\mathbf {w},
\end{align}
where $\mathbf {x}_n \in \mathbb{C}^n$ is the transmitted packet of node $n$, $h_n$ is the fading coefficient for node $n$ which is assumed to be quasi-static with Rayleigh distribution and thus, remains constant over $T_f$ symbols which span the whole packet duration. The index $s$ includes all $N-1$ interfering nodes which collide with node $n$, and $\mathbf{w}$ is the additive complex Gaussian noise vector whose entries are of unit variance. Given the SNR $\rho$ of a single node, the SINR of any node $n$ is
\begin{align}\label{eq2}
\rho_i=\frac{\rho}{1+\rho\sum_{s} |h_s|^2}.
\end{align}
\begin{figure}[!t] 
	\centering
	\includegraphics[width=0.7\columnwidth]{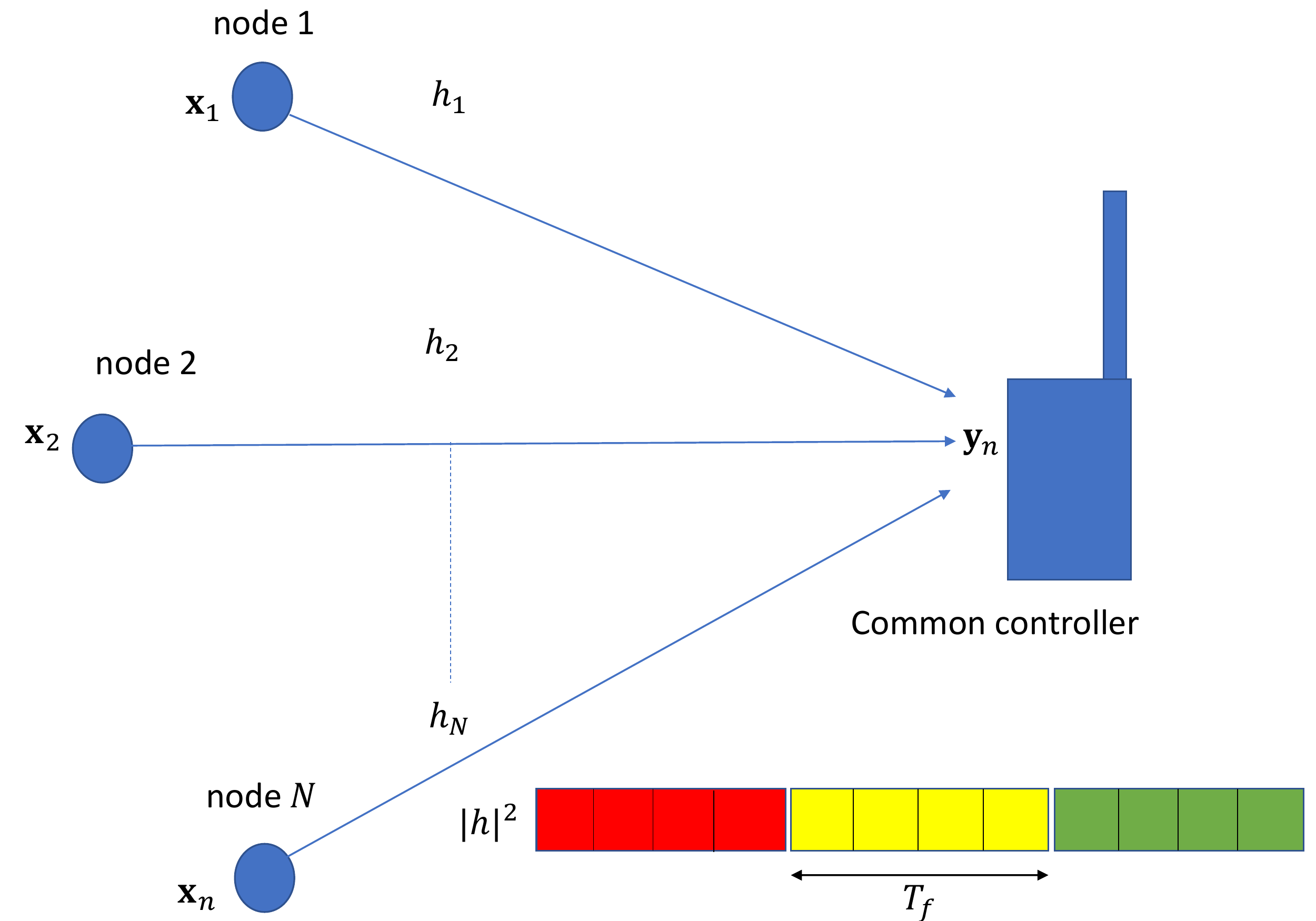}
	\vspace{-2mm}
	\caption{Network Layout.}
	\label{channel}
	\vspace{-6mm}
\end{figure}
We assume that all nodes are equidistant from the common controller and CSI is available at each node and the controller. Thus, as the number of nodes increases, the sum of Rayleigh distributed fading envelopes of $N-1$ interfering nodes becomes $\sum_{s} |h_s|^2\approx N-1$ \cite{fundamentals} and the interference resulting from nodes in set $s$ can be modeled as in \cite{paper3} where (\ref{eq2}) reduces to 
\begin{align}\label{eq2.2}
\rho_i=\frac{\rho}{1+\rho \ (N-1)}.
\end{align}
Note that, CSI acquisition in this setup is not trivial and its cost is negligible whenever the channel remains constant over multiple symbols. Additionally, as in \cite{paper3} we aim to provide a performance benchmark for such networks without interference coordination. 
\vspace{-1mm}
\subsection{Communication at Finite Blocklength}
We start by presenting the notion of FB transmission, in which short packets are conveyed at rate that depends not only on the SNR, but also on the blocklength and the probability of error $\epsilon$ \cite{paper2}. In this case, $\epsilon$ has a small value but not vanishing. For error probability $\epsilon \in\left[ 0,1\right] $, the normalized achievable rate in bpcu is given by 
\begin{align}\label{eq3}
\begin{split}
r\approx&\log_2(1+\rho_i|h|^2) \\
&-\sqrt{\frac{1}{T_f}\left( 1-\frac{1}{\left( 1+\rho_i|h|^2\right) ^{2}}\right) } Q^{-1}
(\epsilon)\log_2(e),	
\end{split}
\end{align}
where $Q(\cdot)=\int_{\cdot}^{\infty}\frac{1}{\sqrt{2 \pi}}e^{\frac{-t^2}{2}} dt$ is the Gaussian Q-function, and $Q^{-1} (\cdot)$ represents its inverse, $\rho_i$ is the SINR and $|h|^2$ is the fading envelope.  

\section{Effective Capacity analysis under Finite Blocklength} \label{EC FB} 
The concept of EC indicates the capability of communication nodes to exchange data with maximum rate and certain latency constraint. An outage occurs when a packet delay exceeds a maximum delay bound $D_{max}$ and its probability is defined as \cite{paper6}
\begin{align}\label{delay}
P_{out\_ delay}=Pr(delay \geq D_{max}) \approx e^{-\theta \cdot EC \cdot D_{max}}.
\end{align}	 
Conventionally, a network's tolerance to long delay is measured by the delay exponent $\theta$. The network has more tolerance to large delays for small values of $\theta$ (i.e., $\theta\rightarrow 0$), while for large values of $\theta$, it becomes more delay strict. For example, a network with unity EC and an outage probability of $10^{-3}$ can tolerate a maximum delay of $691$ symbol periods for $\theta=0.01$ and $23$ symbol periods when $\theta=0.3$. In quasi-static fading, the channel remains constant within each transmission period $T_f$ \cite{paper13}, and the EC is \cite{paper5}
\begin{align}\label{EC}
EC(\rho_i,\theta,\epsilon)=-\frac{1}{T_f\theta} \ln\left(E_{z=|h|^2}\left[\epsilon+(1-\epsilon)e^{-T_f\theta r}\right]\right).
\end{align} 
\begin{lemma} \label{L1}
	The effective capacity of a certain node communicating in a Rayleigh block fading channel is given by
\begin{align}\label{Rayleigh}
\begin{split}
EC(\rho_i,\theta,\epsilon)\approx-\frac{1}{T_f\theta} \ln \left[\epsilon+(1-\epsilon) \  \mathcal{J}\right],
\end{split}
\end{align}
with
\begin{align}\label{J}
\begin{split}
\mathcal{J}=\sum_{m=0}^M c^m \int_{0}^{\infty}(1+\rho_i z)^{d} \frac{x^m}{m!}e^{-z} dz,
\end{split}
\end{align}
where $d=\frac{-\theta T_f}{\ln(2)}$. Also let $c=\theta \sqrt{T_f} Q^{-1}(\epsilon)\log_2e$ and $x=\sqrt{(1-\frac{1}{(1+\rho_i z)^{2}})}$.
\end{lemma}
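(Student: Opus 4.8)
The plan is to substitute the finite-blocklength rate approximation (\ref{eq3}) into the effective-capacity expression (\ref{EC}), reduce the fading expectation to a single integral over the channel gain, and then dispose of the one non-elementary factor by a truncated power-series expansion.

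First I would write $z=|h|^2$ and expand the exponent $-T_f\theta r$ using (\ref{eq3}). The rate splits into the Shannon-like term $\log_2(1+\rho_i z)$ and the dispersion penalty $\sqrt{\tfrac{1}{T_f}\left(1-\tfrac{1}{(1+\rho_i z)^2}\right)}\,Q^{-1}(\epsilon)\log_2 e$. Exponentiating the first term and converting $\log_2$ to $\ln$ gives $e^{-T_f\theta\log_2(1+\rho_i z)}=(1+\rho_i z)^{-T_f\theta/\ln 2}=(1+\rho_i z)^{d}$ with $d=-\theta T_f/\ln 2$. For the second term, since $T_f\theta\sqrt{1/T_f}=\theta\sqrt{T_f}$, it contributes the factor $e^{cx}$ with $c=\theta\sqrt{T_f}\,Q^{-1}(\epsilon)\log_2 e$ and $x=\sqrt{1-1/(1+\rho_i z)^2}$. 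Hence $e^{-T_f\theta r}=(1+\rho_i z)^{d}e^{cx}$.

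Next I would invoke the distribution of the fading power: for Rayleigh fading with the standard normalization $E[|h|^2]=1$ (consistent with $\sum_s|h_s|^2\approx N-1$ used earlier), $z=|h|^2$ is exponentially distributed with unit mean, so the expectation in (\ref{EC}) becomes $\int_0^\infty(\cdot)\,e^{-z}\,dz$. Since $\int_0^\infty\epsilon e^{-z}\,dz=\epsilon$, (\ref{EC}) reduces to $EC=-\frac{1}{T_f\theta}\ln\!\big[\epsilon+(1-\epsilon)\int_0^\infty(1+\rho_i z)^{d}e^{cx}e^{-z}\,dz\big]$.

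The main obstacle, and the only approximation beyond (\ref{eq3}), is that $\int_0^\infty(1+\rho_i z)^{d}e^{cx}e^{-z}\,dz$ has no elementary closed form: $x$ is a square root of a rational function of $z$, so $e^{cx}$ cannot be integrated against $e^{-z}$ in closed form. I would resolve this by replacing $e^{cx}$ with its Maclaurin series truncated at order $M$, namely $e^{cx}\approx\sum_{m=0}^{M}\frac{(cx)^m}{m!}$, which is legitimate since $x\in[0,1]$ so that the discarded tail is bounded by $\sum_{m>M}|c|^m/m!$ and vanishes as $M\to\infty$; I would then interchange the finite sum with the integral and pull out the constants $c^m$. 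This gives exactly $\mathcal{J}=\sum_{m=0}^{M}c^m\int_0^\infty(1+\rho_i z)^{d}\frac{x^m}{m!}e^{-z}\,dz$ as in (\ref{J}), and substituting back into (\ref{EC}) yields the claimed (\ref{Rayleigh}). I would close with a remark that $M$ trades accuracy against complexity and that each term $\int_0^\infty(1+\rho_i z)^{d}x^m e^{-z}\,dz$ can, if desired, be reduced to incomplete-gamma or confluent-hypergeometric form.
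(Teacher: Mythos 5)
Your proposal is correct and follows essentially the same route as the paper's own proof: substitute (\ref{eq3}) into (\ref{EC}), use the unit-mean exponential pdf $e^{-z}$ for $z=|h|^2$, factor $e^{-T_f\theta r}=(1+\rho_i z)^{d}e^{cx}$, and truncate the Maclaurin series of $e^{cx}$ at order $M$. Your added remarks on why the truncation is legitimate (boundedness of $x$) and on the interchange of the finite sum with the integral only make the argument more complete than the paper's version.
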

\begin{proof}
For Rayleigh envelope of pdf $f_{|h|^2}(z)=e^{-z}$, the EC expression in (\ref{EC}) can be written as
\begin{align}\label{EC2}
\begin{split}
EC(\rho_i,\theta,\epsilon)= -\frac{1}{T_f\theta} \ln\left(\int_{0}^{\infty}
\left( \epsilon+(1-\epsilon)e^{-\theta T_f r}\right)  e^{-z} dz\right).  
\end{split}
\end{align}
From (\ref{eq3}), we have		
\begin{align}\label{e1}
e^{-\theta T_f r}=e^{-\theta T_f \log_2(1+\rho_i z)}e^{\theta \sqrt{T_f(1-\frac{1}{(1+\rho_i z)^{2}})} Q^{-1}(\epsilon)\log_2e}.
\end{align}	
Elaborating, we attain
\begin{flalign}\label{e2}
e^{-\theta T_f \log_2(1+\rho_i z)}
&=(1+\rho_i z)^{d},
\end{flalign}
\begin{align}\label{e3}
e^{\theta \sqrt{T_f(1-\frac{1}{(1+\rho_i z)^{2}})} Q^{-1}(\epsilon)\log_2e}=e^{cx},  
\end{align}
where $e^{cx} = \sum_{m=0}^\infty \frac{(cx)^m}{m!}$. It follows from (\ref{e1}), (\ref{e2}) and (\ref{e3}) that the expression in (\ref{EC2}) can be written as a truncated sum of $m$ terms leading to (\ref{Rayleigh})  and the approximation becomes equality as $M\rightarrow \infty$.
\end{proof}
The infinite series in (\ref{J}) can be truncated to a finite sum of terms and we evaluate the accuracy of the expression noting that the accuracy increases with the number of terms. However, it is noticed that when examining for 
different network parameters ($N$, $\rho$, $\theta$, $T_f$), the accuracy for expanding 1 term is 92.7$\%$, 2 terms is 99$\%$ and 99.9$\%$ for 3 terms only. Henceforth, in our analysis, 3 terms will be enough. Thus (\ref{Rayleigh}) provides a closed-form approximation for the EC in Rayleigh block fading when $M=2$. Moreover, the expectation in (\ref{EC}) is proved to be convex in $\epsilon$ \cite{paper5}. Thus, it has a unique minimizer $\epsilon^*$, which is consequently the EC maximizer. We define the optimum value of error probability $\epsilon^*$ which maximizes the EC in Rayleigh fading channels as
\begin{align}\label{e*}
\begin{split} 
\epsilon^*(\rho_i,c,d)= \arg\min_{0 \leq \epsilon \leq 1} \
 \epsilon+(1-\epsilon) \ \mathcal{J}.
\end{split}
\end{align}
To obtain the maximum effective capacity $EC_{max}$, we simply insert $\epsilon^*$ into (\ref{Rayleigh}).

\section{Performance analysis} \label{multinode}
We elaborate the effect of interference by plotting the per-node EC obtained from Lemma \ref{L1} for 1, 5 and 10 nodes in Fig. \ref{interference effect}. The network parameters are set as $T_f=1000, \rho=2$, and $\theta=0.01$ and the channel is assumed to be Rayleigh. It is obvious that the per-node EC decreases when increasing $N$ as more interference is added. Notice that the EC curves are concave in $\epsilon$ and hence, have a unique maximizer which is obtained from (\ref{e*}) and depicted in the figure. Another observation worth mentioning is that the optimum probability of error $\epsilon^*$ which maximizes the EC becomes higher when increasing the number of nodes. Notice that in Fig. \ref{interference effect}, we assume $M=2$ in (\ref{J}) which renders an accurate approximation to (\ref{EC}).

Given that all nodes transmit at the same time slot, the controller attempts decoding the transmitted symbols arriving from all of them. When the controller decodes one node's data, the other streams appear as interference to it \cite{paper3}. For this model, imagine that a node needs to raise its EC in order to meet its QoS constraint. We study the interference alleviation scenarios for one node at a certain time slot, while other nodes also keep transmitting at the same time.
\vspace{-2mm}
\subsection{Power control}\label{power_control}
The method of power control depends on increasing the SNR of node $n$ to allow it recover from the interference effect. Let $\rho_c$ be the new SNR of node $n$, while the other nodes still transmit with SNR equal to $\rho$. Then, we equate the SINR equation in (\ref{eq2.2}) to the case where no collision occurs ($N=1$) to obtain
\vspace{-5mm}
\begin{equation}\label{rhoc2}
\begin{split}
\rho_c&=\rho \ (1+\rho (N-1)).
\end{split}
\end{equation}

When a certain node transmits with SNR of $\rho_c$, its EC is the same as in the case when transmitting with SNR equals to $\rho$ while other nodes are silent. The method of power control is simple; however, it causes extra interference into other nodes due to the power increase of the recovering node. From (\ref{rhoc2}), we define the SINR of other nodes colliding in the same network (nodes in set $s$) after the compensation of one node as 
\begin{align}\label{eq44}
\rho_s&=\frac{\rho}{1+\rho_c+\rho (N-2)}=\frac{\rho}{1+\rho \ (\rho+1) (N-1)} 
\end{align}
\begin{figure}[!t] 
	\centering
	\includegraphics[width=0.9\columnwidth]{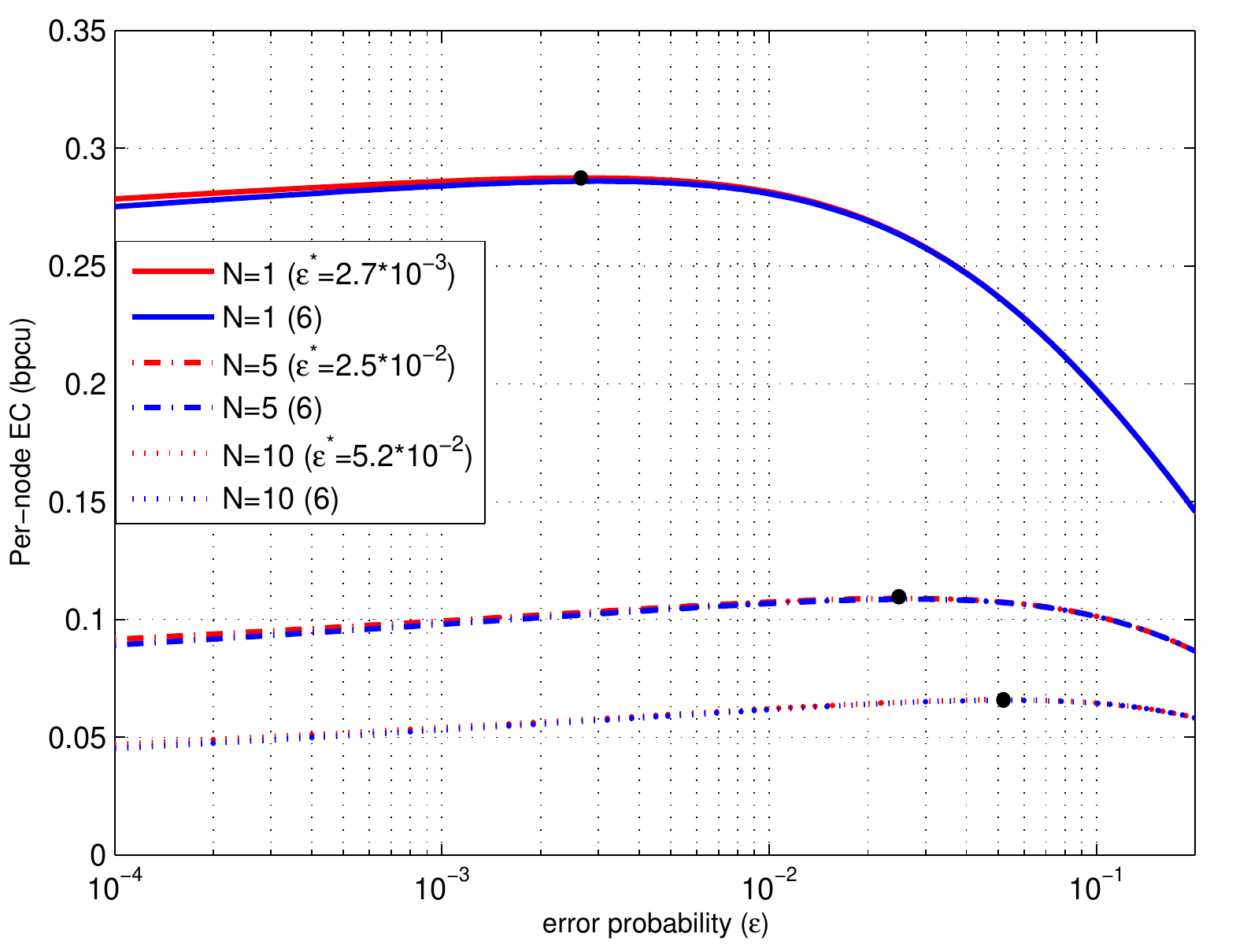}
	\centering
	\vspace{-4mm}
	\caption{EC as a function of error outage probability $\epsilon$ for different number of nodes, with $T_f=1000, \rho=2$, and $\theta=0.01$. }
	\label{interference effect}
	\vspace{-5mm}
\end{figure}
To determine the effect of compensation of one node on the other nodes, we define the compensation loss factor $\alpha_{c}$ as the ratio between maximum EC of other nodes (set $s$) in case of one node compensation and in case of no compensation. That is
\begin{align}\label{alphac}
\alpha_{c}=\frac{EC(\rho_s,\theta,\epsilon_s^*)}{EC(\rho_i,\theta,\epsilon_i^*)} 
\end{align}
where $\epsilon_s^*$ is the optimum error probability obtained from (\ref{e*}) when the SINR is set to $\rho_s$. To understand the effect of increased interference on the network performance, we study the effect of SINR variations on EC for different delay constraints.
\begin{proposition} \label{p1}
	SINR variations have comparably limited effect on EC when the delay constraint becomes more strict and vice versa.
\end{proposition}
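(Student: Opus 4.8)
\emph{Proof plan.} I will measure ``the effect of SINR variations on $EC$'' by the sensitivity $\partial EC/\partial\rho_i$ (a finite perturbation $\rho_i\mapsto\rho_i+\Delta$ then follows by integrating this derivative), and the goal is to show that $\big|\partial EC/\partial\rho_i\big|$ is largest in the loose-delay limit $\theta\to0$ and decays to zero as $\theta\to\infty$, the two regimes being joined monotonically. Starting from the integral form (\ref{EC2}) and abbreviating $g=\int_0^\infty\!\big(\epsilon+(1-\epsilon)e^{-\theta T_f r}\big)e^{-z}\,dz$, a direct differentiation shows that the prefactor $-1/(T_f\theta)$ in (\ref{EC}) cancels the $\theta T_f$ brought down from the exponent, leaving
\begin{align}\label{prop1:grad}
\frac{\partial EC}{\partial\rho_i}=\frac{1-\epsilon}{g}\int_0^\infty\frac{\partial r}{\partial\rho_i}\,e^{-\theta T_f r}\,e^{-z}\,dz=\frac{\big\langle\partial r/\partial\rho_i\big\rangle_\theta}{1+\tfrac{\epsilon}{(1-\epsilon)G}},\qquad G:=\int_0^\infty e^{-\theta T_f r}\,e^{-z}\,dz,
\end{align}
where $\langle\cdot\rangle_\theta$ is the exponentially tilted average (tilt $\theta T_f$, reference density $e^{-z}$). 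Thus the SINR-sensitivity of $EC$ is a tilted average of $\partial r/\partial\rho_i$, damped by a factor involving $\epsilon$.

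The crucial structural fact is read off from (\ref{eq3}): $r$ depends on the fading power $z$ only through the product $u=\rho_i z$, i.e.\ $r=f(\rho_i z)$ with $f(u)=\log_2(1+u)-\sqrt{T_f^{-1}\big(1-(1+u)^{-2}\big)}\,Q^{-1}(\epsilon)\log_2 e$. Hence $r_{\min}:=\min_{z\ge0}r(z)=\min_{u\ge0}f(u)$ does \emph{not} depend on $\rho_i$. A Laplace estimate of $g$ around the minimiser gives $g=e^{-\theta T_f r_{\min}}\cdot(\text{sub-exponential in }\theta)$, so $EC=-\tfrac1{\theta T_f}\ln g\to r_{\min}$ as $\theta\to\infty$ --- a $\rho_i$-independent constant --- and therefore $\partial EC/\partial\rho_i\to0$. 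At the opposite extreme, a first-order expansion of $g$ in $\theta$ yields $EC\to(1-\epsilon)\,\mathbb{E}_z[r(z)]$ as $\theta\to0$, which is essentially the (scaled) ergodic finite-blocklength rate and does grow appreciably with $\rho_i$. So the $\rho_i$-dependence of $EC$, and with it the impact of any SINR change, runs from ``full'' at $\theta\to0$ down to ``none'' at $\theta\to\infty$ --- precisely the claim. For $EC_{\max}$ the same holds: at $\epsilon=\epsilon^*$ of (\ref{e*}) the envelope theorem gives $dEC_{\max}/d\rho_i=\partial EC/\partial\rho_i$, and the $\theta\to\infty$ limit of $EC_{\max}$ is again a $\rho_i$-free constant (obtained by optimising $\epsilon$ in $r_{\min}(\epsilon)$).

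To promote the two endpoints to monotone decay in $\theta$, I would differentiate (\ref{prop1:grad}) in $\theta$: the sign of $\partial_\theta\langle\partial r/\partial\rho_i\rangle_\theta$ equals that of $-\,T_f\,\mathrm{Cov}_\theta\!\big(\partial r/\partial\rho_i,\,r\big)$, which is $\le0$ because $\partial r/\partial\rho_i$ and $r$ are both increasing in $z$, by Chebyshev's correlation inequality for monotone functions of one variable. The main obstacle is hidden in that last ``both increasing'': the finite-blocklength penalty makes $r(z)$ dip below $r(0)=0$ on a tiny interval $z=O(1/T_f)$ near the origin, where $\partial r/\partial\rho_i$ is also non-monotone and where the $\epsilon$-damping factor $1/(1+\epsilon/((1-\epsilon)G))$ is itself non-monotone in $\theta$, so the correlation inequality does not apply verbatim. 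I expect to close the gap by confining the defect to an $O(1/T_f)$-window on which $|\partial r/\partial\rho_i|=O(1/T_f)$, so that it shifts the bound by only $O(1/T_f)$; if a clean bound proves elusive, the endpoint comparison already establishes the proposition qualitatively, and the intermediate monotonicity is in any case visible in Fig.~\ref{interference effect} and the figures that follow. The remaining technical checks --- integrability of $\mathbb{E}_z[r(z)]$ and validity of the Laplace estimate (non-degeneracy of the minimum of $f$) --- are routine since $r$ grows only logarithmically in $z$.
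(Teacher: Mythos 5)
Your yardstick for ``the effect of SINR variations'' --- the sensitivity $\partial EC/\partial \rho_i$ and its behaviour as $\theta$ grows --- is exactly the paper's. The difference is in execution. The paper's proof is a two-line computation: it writes $\frac{\partial EC}{\partial \rho_i}=\frac{e^{-T_f\theta r}}{\epsilon+(1-\epsilon)e^{-T_f\theta r}}\,\mathcal{K}$ with $\mathcal{K}=(1-\epsilon)\frac{\partial r}{\partial \rho_i}>0$, differentiates once more in $\theta$, and reads off that the mixed derivative is negative, so the sensitivity decays monotonically in $\theta$. That calculation treats $r$ as a deterministic scalar --- the expectation $E_z$ in (\ref{EC}) is silently dropped --- and its sign conclusion needs $r>0$; in other words, it sidesteps precisely the two obstacles you correctly isolate, namely the covariance term generated by the $z$-average and the $O(1/T_f)$ window near $z=0$ where the dispersion penalty in (\ref{eq3}) drives $r$ below zero. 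So the ``clean bound'' you were hoping for is what the paper asserts, purchased by those simplifications; if you adopt the same pointwise-in-$z$, $r>0$ level of rigor, your Chebyshev-correlation step collapses to the paper's computation.

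What you add that the paper does not have is the endpoint analysis: since $r$ depends on $z$ only through $\rho_i z$, the $\theta\to\infty$ limit of $EC$ is the $\rho_i$-free constant $r_{\min}$, while the $\theta\to 0$ limit $(1-\epsilon)E_z[r]$ genuinely grows with $\rho_i$. Both limits are computed correctly and give a robust qualitative justification of the claim. The genuine shortfall is the one you flag yourself: the proposition is a comparison for arbitrary $\theta_1<\theta_2$, and your argument establishes only the two extremes, deferring the intermediate monotonicity to a figure; to close it by your route you would still need to bound the $O(1/T_f)$ defect and control the non-monotone $\epsilon$-damping factor. As submitted, therefore, the proposal proves a slightly weaker statement than the proposition asserts --- though it does so more honestly than the paper's own proof.
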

\begin{proof}
	Differentiating (\ref{EC}) with respect to $\rho_i$
	\begin{align*}\label{}
	\begin{split} 
	\frac {\partial EC}{\partial \rho_i}=\frac {\partial EC}{\partial r}\frac {\partial r}{\partial \rho_i}=\frac{e^{-T_f \theta r}}{\epsilon+(1-\epsilon)e^{-T_f\theta r}} \mathcal{K},
	\end{split}
	\end{align*}
	where $\mathcal{K}= \frac {\partial r}{\partial \rho_i} (1-\epsilon)$ is strictly positive since the achievable rate $r$ is an increasing function of the SINR $\rho_i$. Differentiating once more with respect to $\theta$
	\begin{align}\label{}
	\begin{split} 
	\frac{\partial}{\partial \theta}\left( \frac {\partial EC}{\partial \rho_i}\right) = -\frac{\mathcal{K}T_f r e^{-T_f \theta r}}{\left(\epsilon+(1-\epsilon)e^{-T_f\theta r} \right)^2 },
	\end{split}
	\end{align}
	which is strictly negative and thus, validating our proposition.  
\end{proof}

Consider $T_f=1000$ and $\rho=1$, then Fig. \ref{alpha_c} depicts the compensation loss factor $\alpha_c$ for different number of nodes $N$ with $\theta=0.1$ and 0.001.  The figure shows that $\alpha_c$ is lower for smaller values of $\theta$. Hence, the effect of compensation appears to be more severe for less stringent delay constraints. This follows from Proposition \ref{p1} where SINR variations have less of an effect on delay strict networks and vice versa. Finally, we notice that the compensation loss factor decreases rapidly for a less dense network.
\begin{figure}[!t] 
	\centering
	\includegraphics[width=0.9\columnwidth]{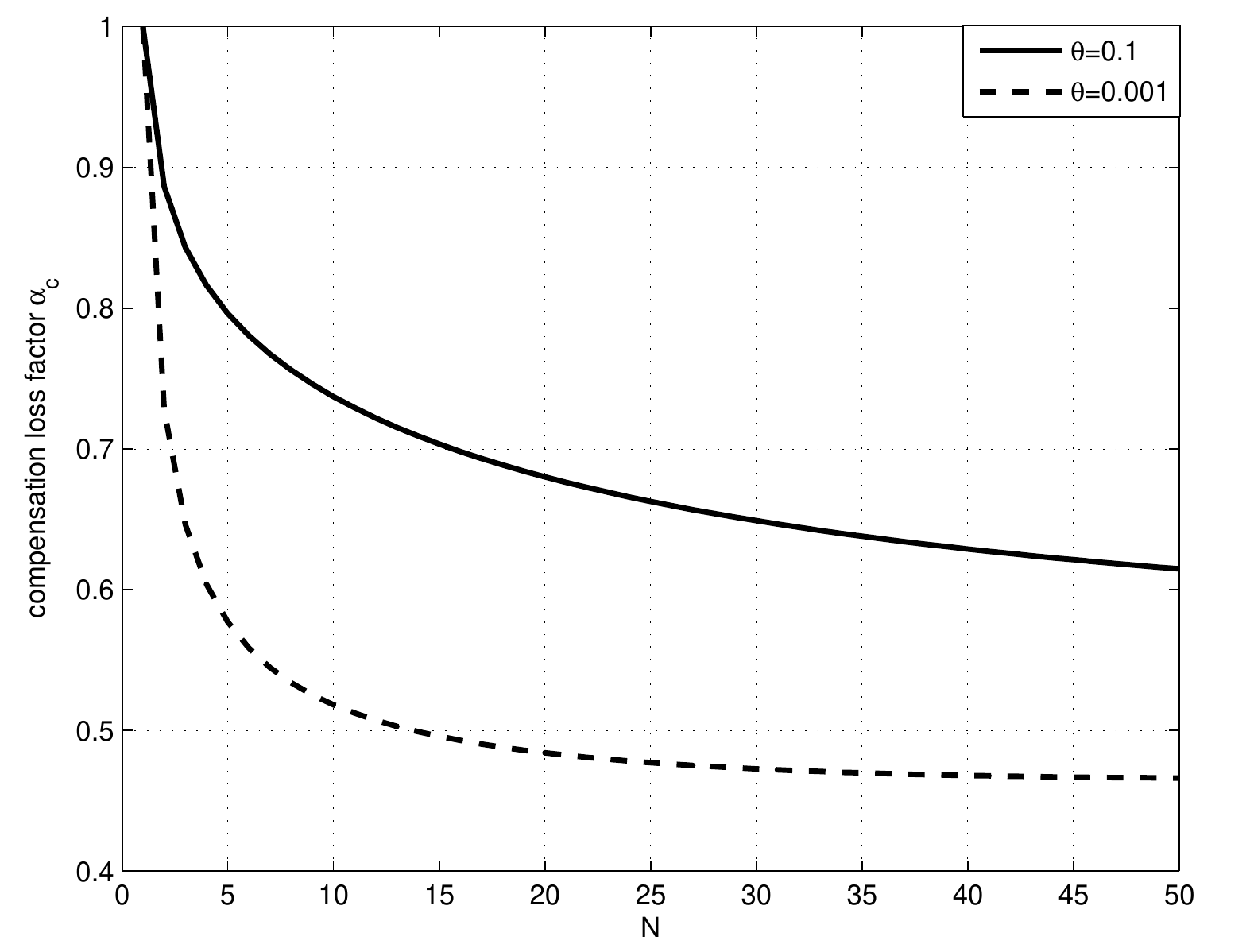}
	\centering
	\vspace{-3mm}
	\caption{Compensation loss factor $\alpha_c$ vs number of nodes $N$}
	\label{alpha_c}
	\vspace{-5mm}
\end{figure}
\subsection{Graceful degradation of the delay constraint} \label{theta comp}
Here we determine how the delay exponent $\theta$ should be gracefully degraded to obtain the same $EC_{max}$ as if the target node was transmitting without collision. This represents the cases where a node has flexible QoS constraint delay wise, so that the EC could be attained given a slight variation on the overall delay as envisioned in \cite{paper1}. Let $\theta$ be the original delay exponent and $\theta_i$ represent the new gracefully degraded one; $\theta_i$ is obtained by solving
\begin{align}\label{RE}
EC(\rho,\theta,\epsilon^*)=EC(\rho_i,\theta_i,\epsilon_i^*)
\end{align} 
where $\epsilon_i^{*}$ is the maximizer of EC for the parameters $(\rho_i,\theta_i)$ and $\epsilon_i^*$ is the optimum error probability for $(\rho_i,\theta_i)$. The solution of (\ref{RE}) renders the necessary value of $\theta_i$ to compensate for the EC decrease due to collision in this case. In Fig. \ref{theta compensation}, we illustrate the graceful degradation of the delay constraint as a function of the target error outage probability. Consider (\ref{RE}) with $N=5$, $\theta_1=0.05$, $\rho=1$ and $T_f=1000$, we get $\theta_i=0.023$. Thus, by gracefully degrading the delay constraint from $0.05$ to $0.023$, we attain the same value for the maximum effective capacity $EC_{max}=0.066$. For a delay outage probability of $10^{-3}$, this corresponds to extending the allowable delay $D_{max}$ from 3600 to 4600 symbol periods. We perform a limited delay extension ($\approx 25 \%$) because the rise in EC partially compensates the graceful degradation of $\theta$ in (\ref{delay}). Note that the optimum error probabilities have different values in each case due to the change in SINR in (\ref{J}).   

\begin{figure}[!t] 
	\centering
	\includegraphics[width=0.9\columnwidth]{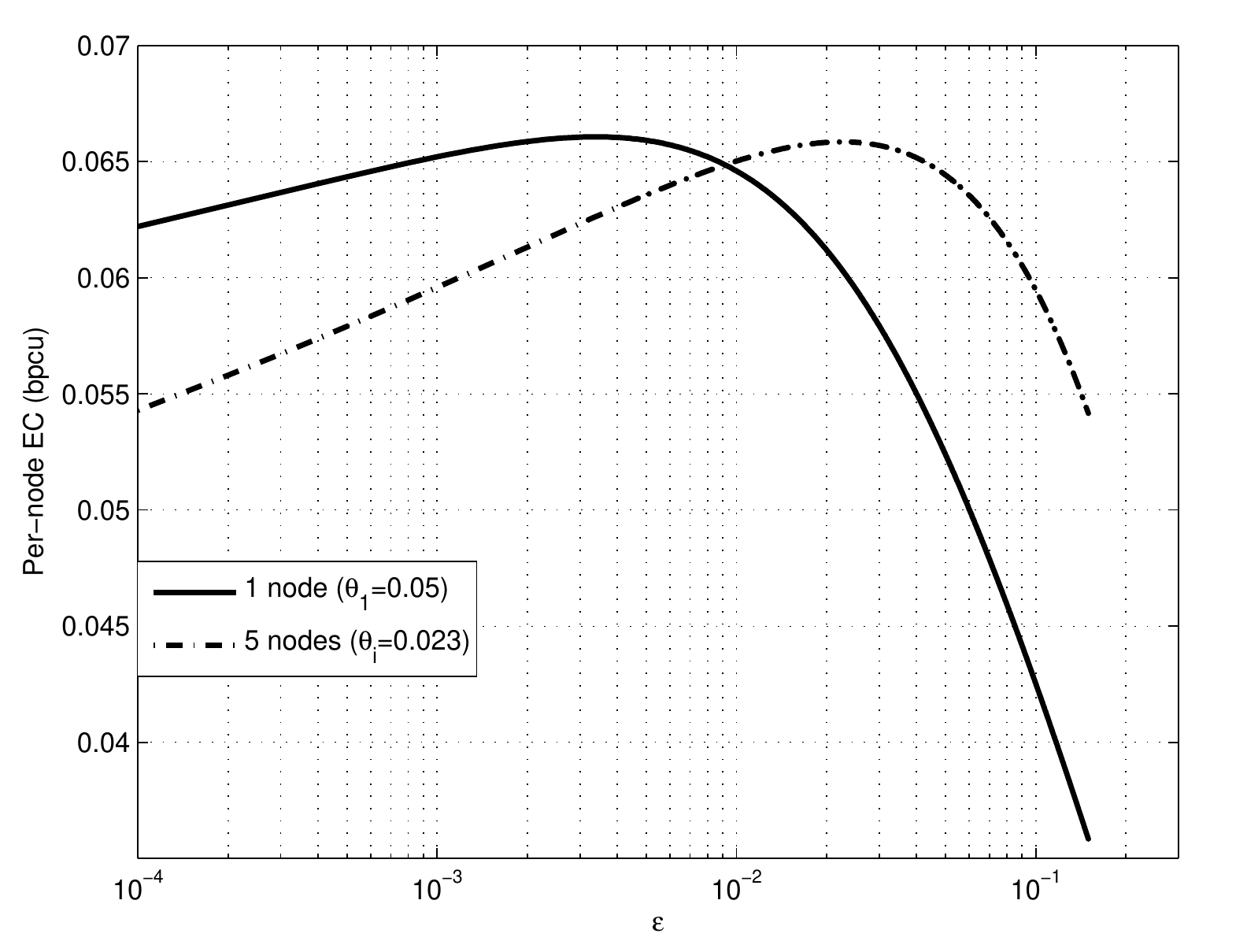}
	\centering
	\vspace{-4mm}
	\caption{Graceful degradation delay constraint $\theta$ in case of 5 nodes colliding where $T_f=1000$ and $\rho=1$.}
	\label{theta compensation}
	\vspace{-4mm}
\end{figure}

\subsection{Joint compensation model} \label{joint compensation} 
To mitigate the side effects of power control and graceful delay constraint degradation, we apply a joint model in which both methods are partially employed. Define the operational SINR in power controlled compensation for nodes in set $s$ as $\rho_{s_o}$, where $\rho_{s_o}$ lies on the interval [$\rho_s$ $\rho_i$]. Using (\ref{eq44}), the operational SNR for the recovering node can be written as
\begin{equation}\label{roco}
\rho_{c_o}=\frac{\rho}{\rho_{s_o}} -1-\rho(N-2),
\end{equation} 
and the operational point of the compensation loss factor $\alpha_{c_o}$ is
\begin{equation}\label{alphaco}
\alpha_{c_o}=\frac{EC(\rho_{s_o},\theta,\epsilon_{s_o}^*)}{EC(\rho_i,\theta,\epsilon_i^*)} 
\end{equation}
where $\epsilon_{s_o}^*$ is the optimum error probability  obtained from (\ref{e*}) for the parameters ($\rho_{s_o},\theta_1$). $\alpha_{c_o}$ is considered to be the loss factor caused by the part of compensation performed via power control. 

Next, we perform the rest of compensation via graceful degradation of $\theta$ as in Section \ref{theta comp}. To obtain $\theta_2$, we solve
\begin{align}\label{theta joint}
&EC(\rho,\theta,\epsilon^*)=EC(\frac{\rho_{c_o}}{1+\rho (N-1)},\theta_2,\epsilon_2^*) 
\end{align}    
From (\ref{theta joint}), we compute the necessary value of $\theta_2$ to continue the compensation process via graceful degradation of the delay constraint.
\begin{figure}[!t] 
	\centering
	\includegraphics[width=0.9\columnwidth]{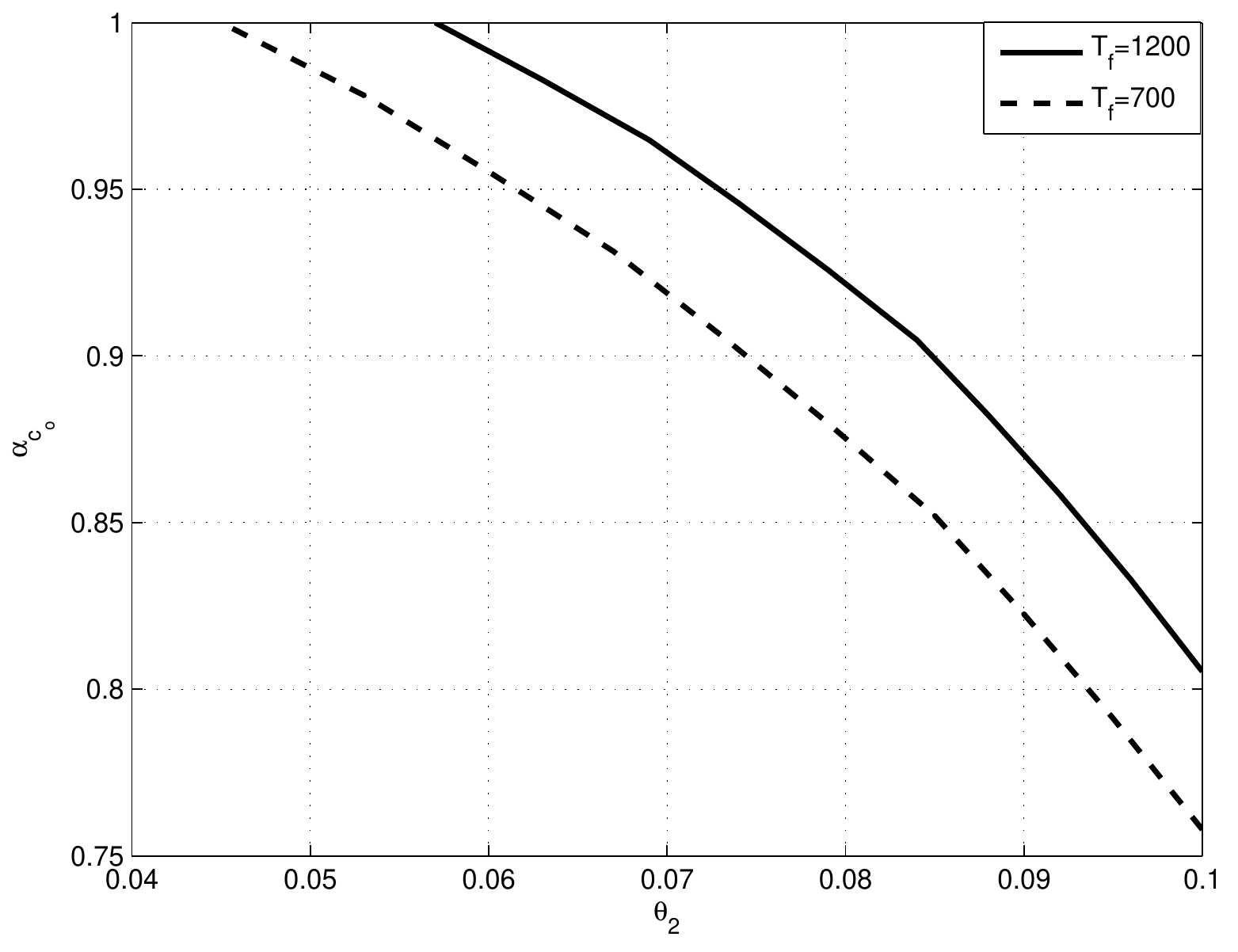}
	\centering
	\vspace{-4mm}
	\caption{Trade off between compensation loss factor via power control $\alpha_{c_o}$ and graceful degradation of delay constraint $\theta_{2}$ for different operational points}
	\label{trade off}
	\vspace{-5mm}
\end{figure}

Fig. \ref{trade off} illustrates different operational points for the joint model for different blocklength $T_f$ where $N=5$, $\rho=1$, $P_{out\_ delay}=10^{-3}$ and $\theta_1=0.1$. For example, when $T_f=700$, we select the operational point $\alpha_{c_o}=0.9, \theta_2=0.075$. This implies that a part of compensation will be performed via power control, which leads to $10\%$ loss in EC of other nodes (set $s$). Then, the rest of compensation will be performed by gracefully degrading its $\theta$ from 0.1 to 0.075. The maximum delay of the recovering node remains 2500 symbol periods before and after recovery as restoring the EC compensates for the decrease in $\theta$ in (\ref{delay}). The figure also shows that for smaller packet sizes, the amount of losses due to compensation are higher. 

Now, we propose an objective function leveraging the network performance for the joint model. First, we define the priority factor $\eta_{\alpha}$ as a measure of the risk of decrease in EC of nodes in set $s$ when the compensating node boosts its transmission power. In other words, the higher the value of $\eta_{\alpha}$, the more important it is not to allow much degradation of EC of nodes in set $s$ and hence, we try not to compensate via power control and shift compensation towards $\theta$ graceful degradation. On the other hand, we define the priority factor $\eta_\theta$ as a measure of strictness of the delay constraint (i.e., the higher the value of $\eta_\theta$, the more strict it is not to degrade delay constraint). Thus, we can formalize our objective function as the summation
\begin{equation}\label{eta}
\eta=\eta_\alpha \alpha_{c_o}+\eta_\theta \theta_2
\end{equation} 
where ($\alpha_{c_o},\theta_2$) is the operational point. Now, we choose this operational point to satisfy 
\begin{equation}\label{op}
\begin{split}
\eta_{max}=&\max_{\theta_2 \geq 0} \ \eta_\alpha \alpha_{c_o}+\eta_\theta \theta_2 \\ 
s.t \ \ & \rho_s \leq \rho_{s_o} \leq \rho_i \\
\end{split}
\end{equation} 
where the solution to this problem gives the optimum operational point which can be found from (\ref{roco}), (\ref{alphaco}) and (\ref{theta joint}).

For an MTC network with 15 devices where $T_f=1000, \rho=2, \theta_1=0.1,\eta_\alpha=1$ and $\eta_\theta=4$, the optimum value of $\rho_{s_o}$ will be 0.057. This value corresponds to the operational point $\alpha_{c_o}=0.9397$ and $\theta_2=0.053$. The SNR of the recovering node becomes $\rho_{c_o}=8.08$. Thus, to maximize the network throughput according to the given priority factors, the compensating node boosts its SNR from 2 to 8.08 and gracefully degrades its delay exponent from 0.1 to 0.053. This results in only 6 \% loss in EC of other nodes as depicted in Fig. \ref{another}. 
\begin{figure}[!t] 
	\centering
	\includegraphics[width=0.9\columnwidth]{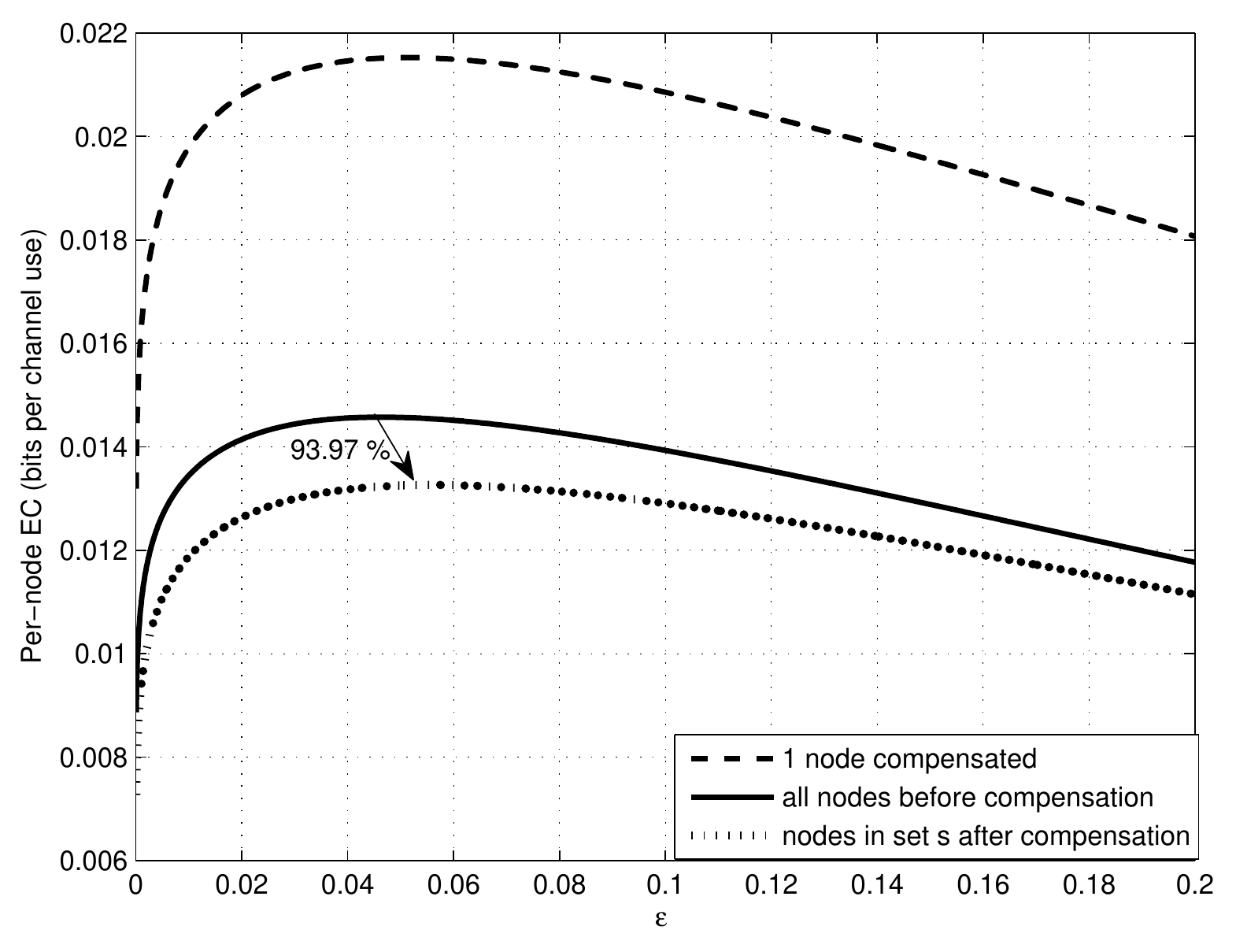}
	\centering
	\vspace{-5mm}
	\caption{Per-node EC as a function of error probability $\epsilon$ before and after joint compensation for $T_f=1000, \theta_1=0.1, \rho=2$, and $N=15$}
	\label{another}
	\vspace{-5mm}
\end{figure}
\section{Conclusion} \label{conclusion}
\vspace{-1mm}
In this work, we presented a detailed analysis of the EC for delay constrained MTC networks in the finite blocklength regime. For Rayleigh block fading channels, we proposed an approximation for the EC and characterized the optimum error probability. Our analysis indicated that SINR variations have minimum effect on EC under strict delay constraints. In a dense MTC network scenario, we illustrated the effect of interference on EC. We proposed power control as an adequate method to restore the EC in networks with less stringent delay constraints. Another method is graceful degradation of delay constraint, where we showed that a very limited extension in delay limit could successfully recover the EC. Joint compensation emerges as a combination between these two methods, where an operational point is selected to maximize an objective function according to the networks design aspects. As future work, we aim to analyze the impact of imperfect CSI on the EC and coordination algorithms that maximize EC with fairness constraints.
\vspace{-1.5mm}
\section*{Acknowledgments}
\vspace{-1mm}
This work has been partially supported by Finnish Funding Agency for Technology and Innovation (Tekes), Huawei Technologies, Nokia and Anite Telecoms, and Academy of Finland (under Grant no. 307492).
\vspace{-2mm}
\bibliographystyle{IEEEtran}
\bibliography{di}
\end{document}